\documentclass[letterpaper, 10 pt, conference]{ieeeconf}
\IEEEoverridecommandlockouts


\usepackage{latexsym}
\usepackage{graphicx}
\usepackage{amsfonts,amssymb,amsmath}
\usepackage{hyperref}
\usepackage{cleveref}
\usepackage{ mathrsfs}
\usepackage{easyReview}
\usepackage{cite}

\newtheorem{lemma}{Lemma}
\newtheorem{theorem}{Theorem}

\title{\bf
Rigid Body Geometric Attitude Estimator using Multi-rate Sensors}

\author{Maulik Bhatt$^{1}$\thanks{$^{1}$Undergraduate Student, Aerospace Engineering, Indian Institute of Technology Bombay, 400076, India. {\tt \footnotesize maulik.bhatt@iitb.ac.in}}, Srikant Sukumar$^{2}$\thanks{$^{2}$Associate Professor , Systems and Control Engineering, Indian Institute of Technology Bombay, 400076, India. {\tt \footnotesize srikant@sc.iitb.ac.in} }, Amit K. Sanyal$^{3}$\thanks{$^{3}$Associate Professor,Mechanical and Aerospace Engineering, Syracuse University, Syracuse, NY, USA. {\tt \footnotesize aksanyal@syr.edu}}
}

\begin{document}

\maketitle

\begin{abstract}
    A geometric estimator is proposed for the rigid body attitude under multi-rate measurements using discrete-time Lyapunov stability analysis in this work. The angular velocity measurements are assumed to be sampled at a higher rate compared to the attitude. The attitude determination problem from two or more vector measurements in the body-fixed frame is formulated as Wahba's problem. In the case when measurements are absent, a discrete-time model for attitude kinematics is assumed in order to propagate the measurements. A discrete-time Lyapunov function is constructed as the sum of a kinetic energy-like term that is quadratic in the angular velocity estimation error and an artificial potential energy-like term obtained from Wahba's cost function. A filtering scheme is obtained by discrete-time stability analysis using a suitable Lyapunov function. The analysis shows that the filtering scheme is exponentially stable in the absence of measurement noise and the domain of convergence is almost global. For a realistic evaluation of the scheme, numerical experiments are conducted with inputs corrupted by bounded measurement noise. Simulation results exhibit convergence of the estimated states to a bounded neighborhood of the actual states.
\end{abstract}

\begin{keywords}
Geometric Control, Attitude Control, Discrete-time Lyapunov Methods
\end{keywords}

\section{Introduction}\label{sec:1}
Attitude estimation of rigid bodies finds a wide variety of applications including spacecrafts, robotics, underwater vehicles, aerial vehicles and so on. In this work, we address the estimation problem for attitude and angular velocity of a rigid body given multi-rate measurements. Attitude estimators typically rely on two kinds of measurements in the body-fixed frame, 1) known inertial vector measurements, and 2) angular velocity measurements. In practice, however, these two measurements may not be available at the same time. The number of observed inertial directions may also vary over time. However, it is assumed that the number of observed inertial directions are at least two so that the attitude can be uniquely determined from the measured directions.

One of the earliest solutions to such a problem is found in \cite{black1964passive} where the TRIAD algorithm is used to determine the rotation matrix using two independent vector measurements. The limitation of this algorithm is its sensitivity to noise. In the further developments, perhaps the most influential work in the field of attitude estimation field was proposed by Wahba, as an optimization problem for estimating the attitude by minimizing the sum of the squared norms of vector errors using three or more vector measurements, in \cite{wahba1965least}. Solutions to the Wahba's problem have been attempted via multiple methods. Markley solved it using the Singular Value Decomposition (SVD) method in \cite{markley1988attitude}. QUEST algorithm, which determines the attitude that achieves the best-weighted overlap of an arbitrary number of reference vectors, is presented in \cite{shuster1981three}. In \cite{mortari1997esoq}, Mortari presented the EStimator of the Optimal Quaternion (ESOQ) algorithm, which provides the closed-form expressions of a $4\times4$ matrix's eigenvalues and then computes the eigenvector associated with the greatest of them, representing the optimal quaternion. Numerical solutions to the Wahba's problem are presented in \cite{psiaki2012numerical}. 

Comprehensive surveys of various filtering based methods employed in attitude determination are available in \cite{crassidis2007survey,madinehi2013rigid}. However, most of them either present the attitude estimation scheme in continuous-time or neglect the delay in the attitude measurements. One of the earliest attempts to solve the problem of rigid body attitude estimation with multi-rate measurements is found in \cite{sanyal2012attitude} using uncertainty ellipsoids. \cite{khosravian2015recursive} presents a recursive method based on the cascade combination of an output predictor and an attitude observer. Attitude estimation using single delayed vector measurement and biased gyro appeared in \cite{bahrami2014delay, bahrami2017global}. Velocity aided attitude estimation with sensor delay is presented in \cite{khosravian2014velocity}. Nonlinear complementary filters for rigid body attitude estimation are presented in \cite{mahony2008nonlinear}. Few other examples where non-linear or geometric methods used in determining attitude estimates are \cite{vasconcelos2007landmark,vasconcelos2008nonlinear,valpiani2008nonlinear}. However, \cite{mahony2008nonlinear,vasconcelos2007landmark,vasconcelos2008nonlinear,valpiani2008nonlinear} do not address the multi-rate measurement case.

As evident from above, the problem of attitude estimation in case of multi-rate measurements in discrete-time without any assumptions on the measurement noise and number of observed directions has not been addressed in a geometric framework. The focus of the current work is, therefore, the development of a geometric attitude determination scheme under multi-rate measurements in discrete time, with robustness to noise guarantees. In the geometric approach, the attitude is represented globally via the rotation matrix without using local coordinates. We do not assume any specific statistics on the measurement noise (such as noise distribution, variance, etc.) but that it is bounded. The multi-rate discrete-time filtering scheme presented here is obtained by using the discrete-Lyapunov method applied on a Lyapunov candidate that depends on the state estimation errors. The filtering scheme provided is asymptotically stable with almost global convergence. In \cite{izadi2014rigid}, a filtering scheme in continuous-time is proposed by applying the Lagrange-d'Alembert principle on suitably formulated artificial kinetic and potential energy functions. In \cite{izadi2014rigid}, the authors formulate filter equations assuming that inertial vector measurements and angular velocity measurements are available synchronously and continuously. We relax that assumption in this article and provide a provably stable geometric filter for attitude estimation under multi-rate measurements. 

This paper is organized as follows. In \Cref{sec:2}, the attitude estimation problem is formulated as Wahba's optimization problem and then some important properties of the Wahba's cost function are presented. In the \Cref{sec:3}, the propagation model for the measurements in the multi-rate measurement case is presented and then an exponentially stable discrete-time estimator with an almost global domain of convergence is derived using the discrete-time Lyapunov method. The domain of convergence is identical to that shown in \cite{izadi2014rigid}. Filter equations are numerically verified with realistic measurements (corrupted by bounded noise) in \Cref{sec:4}. Finally, \Cref{sec:5} presents the concluding remarks with contributions and future work.

\section{Attitude determination from vector measurements}\label{sec:2}

Rigid body attitude can be determined by measuring $k \in \mathbb{N}$ known and independent inertial vectors in the body-fixed frame. Let these vectors in the coordinate frame fixed to the body be denoted by $u_i^m$ for $i = 1,\ldots,k$, where $k \geq 2$. The assumption of $k\geq2$ is required for the unique determination of attitude at a particular instant. For $k=2$, the cross product of the two measured vectors is used as the third independent measurement for attitude determination. Let the corresponding known inertial vectors be denoted by $e_i$. Also, let the true vectors in the body-fixed frame be denoted by $u_i := R^Te_i$, where $R$ is the rotation matrix of the body-fixed frame with respect to the inertial frame. This rotation matrix provides a coordinate-free global and unique description of the attitude of the rigid body. Define the matrix composed of all $k$ measured vectors expressed in the body-fixed frame as column vectors,
\begin{align}\label{eq:1}
    U^m & = [\begin{matrix} u_1^{m} & u_2^{m} & u_1^{m}\times u_2^{m}\end{matrix}] \in \mathbb{R}^{3\times3} \; \text{when} \; k=2 \; \text{and}, \nonumber\\
U^m & = [\begin{matrix} u_1^{m} & u_2^{m} & \ldots &u_k^{m}\end{matrix}] \in \mathbb{R}^{3\times k} \; \text{when} \; k > 2
\end{align}
and expressing them in inertial frame,
\begin{align}\label{eq:2}
    E & = [\begin{matrix} e_1 & e_2 & e_1\times e_2\end{matrix}] \in \mathbb{R}^{3\times3} \; \text{when} \; k=2 \;\text{and}, \nonumber\\
E & = [\begin{matrix} e_1 & e_2 & \ldots &e_k\end{matrix}] \in \mathbb{R}^{3\times k} \; \text{when} \; k > 2
\end{align}
The \emph{true} body vector matrix is as below.
\begin{align}\label{eq:3}
    U &= R^{T}E = [\begin{matrix} u_1 & u_2 & u_1\times u_2\end{matrix}] \in \mathbb{R}^{3\times3} \; \text{when} \;  k=2 \; \text{and}, \nonumber\\
U & = R^{T}E = [\begin{matrix} u_1 & u_2 & \ldots &u_k\end{matrix}] \in \mathbb{R}^{3\times k} \; \text{when} \; k > 2
\end{align}

\subsection{Generalization of Wahba's cost function for instantaneous attitude determination from vector measurements}

The optimal attitude determination problem using a set of vector measurements is finding an estimated rotation matrix $\hat{R} \in SO(3)$, where $SO(3) := \{R \in \mathbb{R}^{3\times3} \; | \; R^TR = RR^T = I\}$, such that a weighted sum of squared norms of the vector errors,
\begin{equation}
    s_i = e_i - \hat{R}u_i^m
\end{equation}
is minimized. This attitude determination problem is known as Wahba's problem and consists of minimizing the value of 
\begin{equation}\label{eq:5}
     \mathcal{U}^{0}(\hat{R},U^{m}) = \frac{1}{2}\sum_{i=1}^{k}w_i (e_i - \hat{R}u_i^m)^T(e_i - \hat{R}u_i^m)
\end{equation}
with the respect to $\hat{R} \in SO(3)$, where the weights $w_i > 0$ for all $i \in \{1,2,\ldots,k\}$. Defining the trace inner product on $\mathbb{R}^{m\times n}$ as
\begin{equation}
    \langle\,A_1,A_2\rangle := \text{trace}(A_1^{T}A_2)
\end{equation}
we can express \cref{eq:5} as,
\begin{equation}\label{eq:7}
    \mathcal{U}^0(\hat{R},U^{m}) = \frac{1}{2} \langle\,E - \hat{R}U^m,(E-\hat{R}U^m)W\rangle
\end{equation}
where $U^m$ is given by \cref{eq:1}, $E$ is given by \cref{eq:2}, and $W = \text{diag}(w_i)$ is the positive definite diagonal matrix of the weight factors for the measured directions. \\

$W$ in \cref{eq:7} can be generalized to be any positive definite matrix. Another generalization of Wahba's cost function is given by,
\begin{equation}\label{eq:8}
    \mathcal{U}(\hat{R},U^{m}) = \mathit{\Phi}\left(\frac{1}{2} \langle\,E - \hat{R}U^m,(E-\hat{R}U^m)W\rangle\right)
\end{equation}
where, $\mathit{\Phi}:[0,\infty) \mapsto [0,\infty)$ is a $\mathcal{C}^2$ function with $\mathit{\Phi}(0)=0$ and $\mathit{\Phi^{\prime}(x)}>0, \; \forall x \in [0,\infty)$. Further, $\mathit{\Phi^{\prime}(x) \leq \alpha(x)}$ where, $\alpha(\cdot)$ is a class $\mathcal{K}$-function. These properties of $\mathit{\Phi}(\cdot)$ ensure that $\mathcal{U}^0(\hat{R},U^m)$ and $\mathcal{U}(\hat{R},U^m)$ have the same minimizer $\hat{R}^*\in SO(3)$. In other words, minimizing the cost $\mathcal{U}$, which is a generalization of the cost $\mathcal{U}^0$, is equivalent to solving Wahba's problem. Here, $W$ is symmetric positive definite, $E$ and $U^m$ are assumed to be of rank 3 which is consistent with assuming that $k\geq2$ measurements are available. 

\subsection{Properties of Wahba's cost function in the absence of measurements errors}

For the case of zero measurement errors (noise), we have $U^m = U = R^TE$. Let $Q = R\hat{R}^T \in SO(3)$ denote the attitude estimation error. Let $(\cdot)^\times:\mathbb{R}^3 \mapsto \mathfrak{so}(3) \subset \mathbb{R}^{3\times3}$ be the skew-symmetric matrix cross-product operator and denotes the vector space isomorphism between $\mathbb{R}^3$ and $\mathfrak{so}(3)$, where $\mathfrak{so}(3) := \{ M \in \mathbb{R}^{3\times3} \; | \; M + M^T = 0\}$:

\begin{equation}
    v^\times = {\begin{bmatrix} v_1 \\ v_2 \\ v_3 \end{bmatrix}}^{\times} = \begin{bmatrix} 0 & -v_3 & v_2 \\ v_3 & 0 & -v_1\\ -v_2 & v_1 & 0 \end{bmatrix}
\end{equation}
Further, let $\text{vex}(\cdot):\mathfrak{so}(3)\mapsto\mathbb{R}^3$ be the inverse of $(\cdot)^\times$.
The following lemmas from \cite{izadi2014rigid} stated here without proof give the structure and characterization of critical points of the Wahba's cost function. \\

\begin{lemma}\label{lemma:1}
Let rank($E$) = 3 and the singular value decomposition of $E$ be given by,
\begin{align}
    & E := U_E\Sigma_EV_E^T \; \text{where} \; U_E \in O(3), V_E \in SO(m). \nonumber\\
 & \Sigma_E \in  Diag^+(3,m),
\end{align}
and Diag$^+(n_1,n_2)$ is the vector space of $n_1\times n_2$ matrices with positive entries along the main diagonal and all the other components zero. Let $\sigma_1,\sigma_2,\sigma_3$ denote the main diagonal entries of $\Sigma_E$. Further, Let $W$ from \cref{eq:7} be given by,
\begin{equation}
    W = V_EW_0V_E^T \; \text{where} \; W_0 \in \text{Diag}^+(m,m)
\end{equation}
and the first three diagonal entries of $W_0$ are given by,
\begin{equation}
    w_1 = \frac{d_1}{\sigma_1^2}, \;  w_2 = \frac{d_2}{\sigma_2^2}, \;  w_3 = \frac{d_3}{\sigma_3^2} \;\; \text{where} \; d_1,d_2,d_3 > 0
\end{equation}
Then, $K = EWE^T$ is positive definite and,\\
\begin{equation}
    K = U_E\Delta U_E^T \; \text{where} \; \Delta = \text{diag}(d_1,d_2,d_3)
\end{equation}
is its eigen decomposition. Moreover, if $d_i \neq d_j$ for $i \neq j$ and $i,j \in \{1,2,3\}$ then $\langle\,I-Q,K\rangle$ is a Morse function whose set of critical points given as the solution of $S_K(Q) := \text{vex}\left(KQ^T - QK\right) = 0$ are,
\begin{equation}\label{eq:13}
    C_Q := \{I,Q^1,Q^2,Q^3\} \; \text{where} \; Q^i = 2U_Ea_ia_i^TU_E^T - I
\end{equation}
and $a_i$ is the $i^{th}$ column vector of the identity matrix $I \in SO(3)$.
\end{lemma}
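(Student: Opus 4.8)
\emph{Proof idea.} The statement bundles three claims---the eigendecomposition of $K$, the identification of the critical set of $f(Q):=\langle I-Q,K\rangle$ with the zeros of $S_K$, and the Morse property with critical set $C_Q$---which I would treat in that order. I would first establish the eigendecomposition by direct substitution: inserting $E=U_E\Sigma_E V_E^T$ and $W=V_E W_0 V_E^T$ into $K=EWE^T$ and cancelling $V_E^T V_E=I$ gives $K=U_E(\Sigma_E W_0\Sigma_E^T)U_E^T$, and a componentwise computation using that $\Sigma_E$ carries only $\sigma_1,\sigma_2,\sigma_3$ on its diagonal together with the choice $w_i=d_i/\sigma_i^2$ yields $\Sigma_E W_0\Sigma_E^T=\mathrm{diag}(d_1,d_2,d_3)=\Delta$. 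Since $U_E\in O(3)$ is orthogonal and $\Delta$ has positive diagonal, this is simultaneously the eigendecomposition and a proof that $K$ is positive definite; this step is purely mechanical.

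Next I would compute the first variation of $f$ to obtain the critical-point condition. Differentiating $f$ along $Q(t)=\exp(t\eta^\times)Q$, using $K=K^T$ together with the identity $\mathrm{trace}(\eta^\times\mu^\times)=-2\eta^T\mu$, only the skew part of $KQ^T$ survives and one gets $Df_Q[\eta]=-\eta^T S_K(Q)$ with $S_K(Q)=\mathrm{vex}(KQ^T-QK)$; one checks first that $KQ^T-QK\in\mathfrak{so}(3)$ so that $\mathrm{vex}$ is defined. Hence the critical points of $f$ are exactly the $Q\in SO(3)$ with $S_K(Q)=0$, equivalently those for which $QK$ is symmetric. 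Conjugating by $U_E$---that is, setting $\tilde Q=U_E^T Q U_E\in SO(3)$---turns this into $\Delta\tilde Q^T=\tilde Q\Delta$, while the candidate points $Q^i$ become the diagonal sign matrices $2a_ia_i^T-I$.

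To enumerate the solutions I would substitute the angle--axis (Rodrigues) form of $\tilde Q$, reducing $\Delta\tilde Q^T=\tilde Q\Delta$ to three scalar equations of the shape $(d_a-d_b)(1-\cos\theta)n_an_b=(d_a+d_b)\sin\theta\,n_c$ over the three partitions $\{a,b,c\}=\{1,2,3\}$. The angle $\theta=0$ gives $\tilde Q=I$; the angle $\theta=\pi$ kills the sine terms and, since the $d_i$ are distinct, forces $n_an_b=0$ for every pair, hence $n=\pm a_i$ and $\tilde Q=2a_ia_i^T-I$, recovering $Q^1,Q^2,Q^3$. The decisive step is excluding $\theta\in(0,\pi)$: there both $\sin\theta$ and $1-\cos\theta$ are nonzero; a short case check rules out any $n_i=0$, and if all $n_i\neq0$ then multiplying the equations in pairs forces each $n_c^2$ to have the sign of a product of two of the differences $d_3-d_2,\,d_1-d_3,\,d_2-d_1$. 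Because these three differences sum to zero they cannot share a common sign, so some $n_c^2<0$---a contradiction. This pins the critical set down to $C_Q=\{I,Q^1,Q^2,Q^3\}$.

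Finally, for the Morse property I would expand $f$ to second order at each critical point in the eigen-frame, where $f(Q)=\sum_i d_i(1-Q_{ii})$. A Rodrigues expansion of $Q$ about a critical point produces a diagonal Hessian whose entries are $\sum_{j\neq k}d_j\epsilon_j$, with $\epsilon=(1,1,1)$ at $I$ and one $+1$ and two $-1$'s at each $Q^i$. At $I$ these are $d_2+d_3,\,d_1+d_3,\,d_1+d_2>0$, so $I$ is a nondegenerate minimum; at $Q^i$ they consist of one negative quantity $-(d_j+d_k)$ together with the two differences $d_i-d_j,\,d_i-d_k$, all nonzero exactly because the $d_i$ are distinct. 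Nondegeneracy at every critical point is the Morse property. I expect the real work to lie in the last two paragraphs: eliminating the one-parameter family of intermediate-angle candidates and certifying Hessian nondegeneracy, both of which rest squarely on the hypothesis $d_i\neq d_j$; the eigendecomposition and the gradient computation are routine by comparison.
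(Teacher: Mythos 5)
Your proof is essentially correct, but note that there is no in-paper proof to compare it against: the paper explicitly states \Cref{lemma:1} (and \Cref{lemma:2}) \emph{without proof}, importing them from \cite{izadi2014rigid}, where the critical-point characterization itself traces back to the standard Morse-theoretic treatment of $\mathrm{trace}\left((I-Q)K\right)$ on $SO(3)$ in the attitude-control literature. Judged on its own, your argument is sound and self-contained, and it follows the same overall architecture as the cited source: the cancellation $K=U_E(\Sigma_E W_0\Sigma_E^T)U_E^T=U_E\Delta U_E^T$ is indeed mechanical (only the first three diagonal entries of $W_0$ survive because $\Sigma_E$ is $3\times m$ diagonal); the first-variation computation via $\mathrm{trace}(\eta^\times\mu^\times)=-2\eta^T\mu$ correctly identifies the critical set with $\{Q: QK=KQ^T\}$, i.e.\ $QK$ symmetric; and conjugating to the eigenframe reduces everything to $\Delta\tilde Q^T=\tilde Q\Delta$. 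Your two ``hard'' steps check out: the axis--angle enumeration gives, for $\theta\in(0,\pi)$, the three cyclic relations $(1-\cos\theta)(d_b-d_a)n_an_b=\sin\theta\,(d_a+d_b)n_c$; the case $n_c=0$ is correctly eliminated (a vanishing component forces a second to vanish, and the remaining equation then equates zero to a nonzero multiple of the surviving component), and with all $n_i\neq 0$ pairwise multiplication forces each product of two of the differences $d_3-d_2,\ d_1-d_3,\ d_2-d_1$ to be positive, impossible since they sum to zero. The Hessian computation is also right: expanding $\tilde Q = D\exp(\epsilon\eta^\times)$ with $D=2a_ia_i^T-I$ gives a diagonal quadratic form with entries $-(d_j+d_k),\ d_i-d_j,\ d_i-d_k$ at $Q^i$ and $d_2+d_3,\ d_1+d_3,\ d_1+d_2$ at $I$, nondegenerate precisely when the $d_i$ are distinct; as a bonus this recovers the index structure (one minimum, one maximum, two saddles) asserted in \Cref{lemma:2}. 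The only cosmetic caution is that your sign conventions in the cyclic relations and in $Df_Q[\eta]=-\eta^TS_K(Q)$ depend on whether the perturbation is applied on the left or the right, but the critical set and the index count are unaffected.
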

\begin{lemma}\label{lemma:2}
Let $K =EWE^T$ have the properties given by \Cref{lemma:1}. Then the map $\mathit{\Phi}: SO(3) \rightarrow \mathbb{R}$, $Q \mapsto \Phi(\langle\,I-Q,K\rangle)$  with critical points given by \cref{eq:13} has a global minimum at the identity $I\in SO(3)$, a global maximum and two hyperbolic saddle points whose indices depend on the distinct eigenvalues $d_1,d_2,$ and $d_3$ of $K$. 
\end{lemma}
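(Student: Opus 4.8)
The plan is to first remove the outer scalar function and reduce the classification to that of the trace function $f(Q) := \langle I - Q, K\rangle = \operatorname{trace}(K) - \operatorname{trace}(QK)$, whose critical set $C_Q$ is already supplied by \Cref{lemma:1}. Since $\Phi$ is $\mathcal{C}^2$ with $\Phi'(x) > 0$ on $[0,\infty)$, the chain rule gives $\nabla(\Phi \circ f) = \Phi'(f)\nabla f$, so the composed map and $f$ have the same critical set $C_Q$ given in \cref{eq:13}. At any such critical point $\nabla f$ vanishes, so the second-order term reduces to $\operatorname{Hess}(\Phi\circ f) = \Phi'(f)\operatorname{Hess}(f)$ with $\Phi'(f) > 0$; hence the Morse index and the nondegeneracy of every critical point agree for $\Phi \circ f$ and for $f$, and it suffices to classify the critical points of $f$.

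Next I would identify the global minimum directly. Setting $P := U_E^T Q U_E \in SO(3)$ and using $K = U_E \Delta U_E^T$, we have $\operatorname{trace}(QK) = \operatorname{trace}(P\Delta) = \sum_l P_{ll} d_l \le \sum_l d_l = \operatorname{trace}(K)$ because each diagonal entry of a rotation obeys $P_{ll} \le 1$, with equality forcing $P = I$, i.e. $Q = I$. Thus $f(Q) \ge 0$ with equality only at the identity, so $I$ is the global minimum. Since $SO(3)$ is compact and, by \Cref{lemma:1}, $f$ is Morse with finite critical set $C_Q$, its global maximum is also attained inside $C_Q$; evaluating $f(Q^i) = 2\sum_{j \ne i} d_j$ shows the maximizer is the $Q^i$ attached to the smallest eigenvalue $d_i$.

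The main effort is the Hessian computation fixing the index of each $Q^i$. I would use the curve $Q(t) = Q^\star \exp(t\eta^\times)$ with $\eta \in \mathbb{R}^3$ and the identity $(\eta^\times)^2 = \eta\eta^T - |\eta|^2 I$ to get $\operatorname{Hess} f(\eta,\eta) = -\operatorname{trace}(Q^\star(\eta^\times)^2 K) = -\eta^T K Q^\star \eta + |\eta|^2 \operatorname{trace}(Q^\star K)$, which is well defined at a critical point since $\nabla f = 0$ there. Conjugating into the eigenframe of $K$ (so that $K \mapsto \Delta$ and $Q^i \mapsto 2a_i a_i^T - I = \operatorname{diag}(\epsilon_1,\epsilon_2,\epsilon_3)$ with $\epsilon_i = 1$ and $\epsilon_j = -1$ for $j \ne i$) makes $K Q^\star$ diagonal, so the quadratic form is diagonal with eigenvalues $-(d_j + d_k)$ along the $i$-axis and $d_i - d_k$, $d_i - d_j$ along the other two, where $\{i,j,k\} = \{1,2,3\}$. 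The first is always negative while the signs of $d_i - d_j$ and $d_i - d_k$ are fixed by the ordering of the distinct $d_1, d_2, d_3$, giving the index $1 + \#\{m \ne i : d_i < d_m\}$. Across $\{I, Q^1, Q^2, Q^3\}$ this produces the indices $0,1,2,3$: the identity is the minimum, the $Q^i$ with smallest $d_i$ is the global maximum (index $3$), and the remaining two are hyperbolic saddles of indices $1$ and $2$, hyperbolicity holding precisely because $d_i \ne d_j$ keeps every Hessian eigenvalue nonzero.

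The step I expect to be most delicate is this Hessian evaluation: one must check that the cross terms vanish so the form is genuinely diagonal in the eigenframe, and track carefully that both the index and the nondegeneracy are invariant under the conjugation $Q \mapsto U_E^T Q U_E$ and under the chosen tangent-space parametrization. Once this bookkeeping is settled, the sign pattern of the eigenvalues delivers the entire minimum/maximum/saddle classification and its stated dependence on $d_1, d_2, d_3$.
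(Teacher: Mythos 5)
Your proof is correct, but note that the paper itself offers no proof to compare against: both lemmas are explicitly ``stated here without proof'' and imported from \cite{izadi2014rigid}, so your proposal supplies an argument the paper delegates to its reference. What you do is the standard self-contained route in the attitude-dynamics literature, and every step checks out. The chain-rule reduction is sound: at a critical point the term $\mathit{\Phi}''(f)\,df\otimes df$ vanishes, so $\operatorname{Hess}(\mathit{\Phi}\circ f)=\mathit{\Phi}'(f)\operatorname{Hess} f$ with $\mathit{\Phi}'>0$, preserving nondegeneracy and index (here one also uses $\operatorname{trace}(Q^TK)=\operatorname{trace}(QK)$, valid since $K$ is symmetric). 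The minimality argument via $P_{ll}\le 1$ for $P=U_E^TQU_E\in SO(3)$, with equality forcing $P=I$, is clean and avoids any Hessian computation at the identity. Your flagged delicate point resolves exactly as you suspect: in the eigenframe $Q^i\mapsto D_i=2a_ia_i^T-I$ and $K\mapsto\Delta$ are both diagonal, so $\Delta D_i$ is diagonal and the form $-\eta^T\Delta D_i\eta+|\eta|^2\operatorname{trace}(\Delta D_i)$ has no cross terms, yielding eigenvalues $-(d_j+d_k)$, $d_i-d_k$, $d_i-d_j$; and the conjugation $U_E^T\eta^\times U_E=(\det U_E)(U_E^T\eta)^\times$ for $U_E\in O(3)$ introduces at worst a sign on $\eta$, which is immaterial since only $(\eta^\times)^2$ enters the quadratic form. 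Your proof in fact delivers slightly more than the lemma asserts: the explicit critical values $f(Q^i)=2\sum_{j\ne i}d_j$, the identification of the maximizer as the $Q^i$ with smallest $d_i$, and the exact index pattern $\{0,1,2,3\}$, all of which the lemma only states qualitatively, and which feed directly into the dimension count for the stable manifolds in \cref{eq:48} of the stability theorem.
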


\section{Discrete-time estimator in the presence of multirate measurements}\label{sec:3}

\subsection{Discretization of Attitude Kinematics}

Consider the time interval $[t_0,T] \subseteq \mathbb{R}^+$ divided into N equal sub-intervals $[t_i,t_{i+1}]$ for $i = 0,1,\dots,N$ with $t_N = T $ and let $t_{i+1} - t_i = h$ be the time step size. Let the true angular velocity in the body-fixed frame be denoted by $\Omega \in \mathbb{R}^3$. The true and measured angular velocities at the time instant $t_i$ will be denoted by $\Omega_i$ and $\Omega_i^m$ respectively. Further, let $U_i$ and $U_{i}^m$ denote the matrix formed by true and measured inertial vectors in the body-fixed frame at the time instant $t_i$ respectively. The assumption is that angular velocity measurements and inertial vectors measurements in the body-fixed frame are coming at a different but constant rate. In general coarse rate gyros have much higher sampling rate than that of a coarse attitude sensor. Therefore, in a realistic scenario, angular velocities are measured at a higher rate than the inertial vector measurements in the body-fixed frame. Therefore, we assume that the measurements of angular velocity ($\Omega^m$) are available after each time interval $h$ say, $\Omega^m_0,\Omega^m_1,\dots,\Omega^m_N$ while, inertial vector measurements in the body-fixed frame are available after time interval $nh, n \in N$ say, $U^m_0,U^m_n,U^m_{2n},\dots$.\\

We have, $U = R^TE$. Therefore, at time instants $t_i$ and $t_{i+1}$, the following relations will hold true respectively; $U_i = R_i^TE_i, \; U_{i+1} = R_{i+1}^TE_{i+1}$. Here, $R_i$ and $R_{i+1}$ are the rotation matrices from body-fixed frame to inertial frame at time instants $t_i$ and $t_{i+1}$ respectively. $E_i = E_{i+1} = E$ are the corresponding known vectors expressed in the inertial frame. Note that the vectors are fixed in the inertial frame and do not change with the time. \\

The continuous time attitude kinematics are,
\begin{equation}\label{eq:15}
    \Dot{R} = R\Omega^\times
\end{equation}
We discretize the kinematics in \cref{eq:15} as follows,
\begin{equation}\label{eq:14}
    R_{i+1} = R_i\exp{\left(\frac{h}{2}(\Omega_{i+1}+\Omega_i)^\times\right)}
\end{equation}
where, $\exp{(\cdot)}:\mathfrak{so}(3)\mapsto SO(3)$ is the map defined as,
\begin{equation}
    \exp{(M)} = \sum_{i=0}^{\infty}\frac{1}{k!}M^k
\end{equation}

Using \cref{eq:3} and the discretization from \cref{eq:14},
\begin{align}\label{eq:16}
    U_{i+1} & = \exp{\left(-\frac{h}{2}(\Omega_{i+1}+\Omega_i)^\times\right)}R_i^TE_i \nonumber\\ & = \exp{\left(-\frac{h}{2}(\Omega_{i+1}+\Omega_i)^\times\right)}U_i
\end{align}

For the instants of time when inertial vector measurements in the body-fixed frame are not available we will use \cref{eq:16} to obtain the missing values of $U_i^m$. This implies that for the time instants $ (n-1)h < t_i < nh, n \in \mathbb{N}$, by employing the propagation scheme in \cref{eq:16}, we propagate direction vector measurements between the instants at which they are measured, using the angular velocity measurements that are obtained at a faster rate. We now formalise the aforementioned inertial vector measurement model as below,
\begin{equation}\label{eq:17}
    \Tilde{U}_i^m := \begin{cases}
        U_i^m, & \text{if} \; i\,mod\,n = 0
        \\
        \exp{\left(-\frac{h}{2}(\Omega_{i-1}^m+\Omega_i^m)^\times\right)}\Tilde{U}^m_{i-1}, & \text{otherwise}.
    \end{cases}
\end{equation}

Note that in the absence of measurements errors, we have $\Omega_i^m = \Omega_i,\, \forall i \in \{0,1,\ldots,N\}$.  Also, $U_i^m = U_i$ for the time instants when inertial vector measurements are available. Now, at time instant $t_0$, we have $\Tilde{U}^m_{0}=U^m_0 = U_0$ and $\Omega^m_0 = \Omega_0$. Using \cref{eq:17} at time  instant $t_1$, noting that $\Omega^m_1 = \Omega_1$, we get $\Tilde{U}^m_{1} = \exp{\left(-\frac{h}{2}(\Omega_{0}+\Omega_1)^\times\right)}{U}_{0}$. Comparing it with \cref{eq:16}, we have $\Tilde{U}^m_{1} = U_1$. Using the relation from \cref{eq:3} we have $\Tilde{U}^m_{1} = R_1^TE_1$. Similarly, combining \cref{eq:16}, and \cref{eq:17}, and using the relation in \cref{eq:3} we get the following relation for all $i \in \{0,1,\ldots,N\}$ in the absence of measurement errors.
\begin{equation}
    \Tilde{U}_i^m = R_i^TE_i
\end{equation}

\subsection{Discrete-time attitude state estimation using the discrete Lyapunov Approach}

The value of the Wahba's cost function at each instant encapsulates the error in the attitude estimation. We can consider the Wahba's cost function as an artificial potential energy-like term. Therefore using \cref{eq:8} we have,
\begin{equation}\label{eq:18}
    \mathcal{U}(\hat{R}_i,\Tilde{U}^{m}_i) = \mathit{\Phi}\left(\frac{1}{2} \langle\,E_i - \hat{R}_i\tilde{U}^m_i,(E_i-\hat{R}_i\Tilde{U}_i^m)W_i\rangle\right)
\end{equation}

The term encapsulating the "energy" in the angular velocity estimation error is denoted by the map $\mathcal{T}:\mathbb{R}^3\times\mathbb{R}^3\mapsto \mathbb{R}$ defined as,
\begin{equation} \label{eq:19}
    \mathcal{T} (\hat{\Omega}_i,\Omega^m_i)  = \frac{m}{2}(\Omega^m_i - \hat{\Omega}_i)^T(\Omega^m_i - \hat{\Omega}_i)
\end{equation}
where $m>0$ is a scalar and $\Tilde{U}_i^m$ is according to \cref{eq:17}. In the absence of measurement errors, we have $\Tilde{U}_i^m = R_i^TE_i$. Therefore we can we can write \cref{eq:18} in terms of state estimation error $Q_i = R_i\hat{R}^T_i$ as follows,
\begin{align}
     \mathcal{U}(\hat{R}_i,\Tilde{U}^{m}_i) & = \mathit{\Phi}\left(\frac{1}{2} \langle\,E_i - \hat{R}_iR_i^TE_i,(E_i-\hat{R}_iR_i^TE_i)W_i\rangle\right) \nonumber \\
     & = \mathit{\Phi}\left(\langle\,I - R_i\hat{R}_i^T,E_iW_iE_i^T\rangle\right) \nonumber \\
     \Rightarrow \mathcal{U}(Q_i) & =  \mathit{\Phi}(\langle\,I-Q_i,K_i\rangle) \;\; \text{where} \;\; K_i = E_iW_iE_i^T
\end{align}

The weights $W_i$s are chosen such that $K_i$ is always positive definite with distinct eigenvalues according to \cref{lemma:1}. Further, \cref{eq:19} can be written in terms of angular velocity estimation error, $\omega_i := \Omega^m_i - \hat{\Omega}_i$ as follows.
\begin{equation}
    \mathcal{T} (\omega_i) = \frac{m}{2}(\omega_i)^T(\omega_i)
\end{equation}
\noindent
\begin{theorem}
Consider a multi-rate measurement model for rigid body attitude determination with angular velocity available after each time interval $h>0$ denoted as, $\Omega^m_0,\Omega^m_1,\dots,\Omega^m_N$ and inertial vector measurements in the body-fixed frame being available after time interval $nh, n \in N$ denoted as, $U^m_0,U^m_n,U^m_{2n},\ldots$. Further, let the propagated inertial vector denoted by, $\Tilde{U}^m_i$ be modeled by \cref{eq:17}. Then the estimation scheme,
\begin{equation}\label{eq:22}
    \begin{cases}
    \omega_{i+1} = \frac{1}{m+l}\left[ (m-l)\omega_i + k_phS_{L_i}(\hat{R}_i) \right]\\ 
    \hat{\Omega}_i = \Omega^m_i - \omega_i \\
    \hat{R}_{i+1} = \hat{R}_i\exp{\left(\frac{h}{2}(\hat{\Omega}_{i+1}+\hat{\Omega}_i)^\times\right)}
    \end{cases}
\end{equation}
where $S_{L_{i}}(\hat{R}_i) = \text{vex}(L_i^T\hat{R}_i - \hat{R}_i^TL_i) \in \mathbb{R}^3$, $L_i = E_iW_i(\Tilde{U}^m_i)^T$, $l>0$, $l \neq m$ and $k_p > 0$, is asymptotically stable at the estimation error state $(Q,\omega) := (I,0)$ ($Q_i = R_i\hat{R}^T_i$) in the absence of measurement noise. Further, the domain of attraction of $(I,0)$ is a dense open subset of $SO(3)\times\mathbb{R}^3$.
\end{theorem}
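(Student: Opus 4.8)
The plan is to build a discrete-time Lyapunov function from the two energy-like terms already introduced, show it strictly decreases along the error dynamics in the noise-free case, and then use the Morse-theoretic classification of critical points from \Cref{lemma:2} to upgrade local asymptotic stability to an almost-global convergence statement.

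First I would fix the noise-free identifications $\Omega_i^m = \Omega_i$, $\tilde{U}_i^m = R_i^T E_i$, and $\omega_i = \Omega_i - \hat{\Omega}_i$. The key preliminary step is to rewrite the innovation term $S_{L_i}(\hat{R}_i)$ intrinsically. Substituting $L_i = E_i W_i (\tilde{U}_i^m)^T = K_i R_i$ and $\hat{R}_i = Q_i^T R_i$ gives $L_i^T \hat{R}_i - \hat{R}_i^T L_i = R_i^T (K_i Q_i^T - Q_i K_i) R_i$, and the identity $\text{vex}(R^T M R) = R^T \text{vex}(M)$ for $R \in SO(3)$, $M \in \mathfrak{so}(3)$ then yields $S_{L_i}(\hat{R}_i) = R_i^T S_{K_i}(Q_i)$. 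This shows that, up to the body-frame rotation $R_i^T$, the filter is driven by the gradient of Wahba's cost; in particular $\|S_{L_i}(\hat{R}_i)\| = \|S_{K_i}(Q_i)\|$ and this term vanishes exactly on the critical set $C_Q$ of \Cref{lemma:1}.

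Next I would propose the candidate $V_i = \mathcal{U}(Q_i) + \mathcal{T}(\omega_i) = \mathit{\Phi}(\langle\,I - Q_i, K_i\rangle) + \frac{m}{2}\omega_i^T\omega_i$. By \Cref{lemma:2}, $\mathcal{U}$ is positive definite with its unique global minimum at $Q = I$, and $\mathcal{T}$ is positive definite in $\omega$, so $V_i$ is positive definite about $(I,0)$. I would then compute $\Delta V_i = V_{i+1} - V_i = \Delta\mathcal{U}_i + \Delta\mathcal{T}_i$. For the kinetic part, the explicit update $(m+l)\omega_{i+1} = (m-l)\omega_i + k_p h\,S_{L_i}(\hat{R}_i)$ gives $\Delta\mathcal{T}_i$ as a quadratic form in $\omega_i$ and $S_{L_i}(\hat{R}_i)$ whose cross term is proportional to $\omega_i^T S_{L_i}(\hat{R}_i)$. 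For the potential part, I would expand $\Delta\mathcal{U}_i$ using $Q_{i+1} = R_i\exp(A)\exp(-B)R_i^T Q_i$ with $A - B = \frac{h}{2}(\omega_{i+1}+\omega_i)^\times$, and show its leading contribution also pairs $S_{K_i}(Q_i)$ against $\omega_i$. The filter coefficients $(m-l)/(m+l)$ and $k_p h/(m+l)$ are precisely those for which these indefinite cross terms cancel, leaving $\Delta V_i \le 0$ with equality only when $\omega_i = 0$ and $S_{K_i}(Q_i) = 0$, i.e. on $\{0\}\times C_Q$.

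The main obstacle will be this $\Delta\mathcal{U}_i$ expansion: because the two matrix exponentials in $Q_{i+1}$ commute neither with each other nor with $R_i$, the difference $\langle\,Q_i - Q_{i+1}, K_i\rangle$ does not collapse to a clean closed form, and the higher-order-in-$h$ remainder must be bounded and shown not to destroy negativity of $\Delta V_i$; I expect this to require careful Baker--Campbell--Hausdorff / Rodrigues-formula bookkeeping together with a smallness condition on $h$. Once strict decrease off the critical set is established, a discrete LaSalle argument confines trajectories to $\{0\}\times C_Q$, and \Cref{lemma:2} identifies $I$ as the only minimizer (the remaining $Q^1,Q^2,Q^3$ being a saddle and a maximum), so their stable manifolds form a nowhere-dense set and the domain of attraction of $(I,0)$ is the dense open complement in $SO(3)\times\mathbb{R}^3$. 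Finally, local exponential stability would follow by linearizing the error dynamics about $(I,0)$ and verifying that the resulting discrete-time system matrix is Schur for the stated gain ranges.
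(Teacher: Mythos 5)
Your overall architecture --- a discrete Lyapunov function built from the Wahba potential plus a kinetic term, a discrete LaSalle argument confining limits to $C_Q\times\{0\}$, and Morse-theoretic dimension counting of stable manifolds for the almost-global domain of attraction --- is exactly the paper's, and your identity $S_{L_i}(\hat R_i) = R_i^T S_{K_i}(Q_i)$ is in fact a cleaner, global version of the paper's critical-point manipulation (which only establishes that $S_{L_i}(\hat R_i)$ vanishes at critical points, via $L_i = K R_i$). However, two steps need correction. First, your Lyapunov candidate omits the gain weighting: it must be $V_i = k_p\,\mathcal{U}(Q_i) + \mathcal{T}(\omega_i)$, not $\mathcal{U}+\mathcal{T}$. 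With the filter gains fixed as in the theorem statement, $k_p h\, S_{L_i}(\hat R_i) = (m+l)\omega_{i+1} - (m-l)\omega_i$, so
\begin{equation*}
\Delta V_i = \tfrac{1}{2}(\omega_{i+1}+\omega_i)^T\bigl(m(\omega_{i+1}-\omega_i) - k_p h\, S_{L_i}(\hat R_i)\bigr) = -\tfrac{l}{2}\,(\omega_{i+1}+\omega_i)^T(\omega_{i+1}+\omega_i)
\end{equation*}
holds only when the potential carries the factor $k_p$; with your unweighted candidate there remains an indefinite residual $\tfrac{(k_p-1)h}{2}(\omega_{i+1}+\omega_i)^T S_{L_i}(\hat R_i)$ for $k_p\neq 1$, and the claimed decrease fails. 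Second, $\Delta V_i = 0$ yields only $\omega_{i+1}+\omega_i = 0$, not directly ``$\omega_i = 0$ and $S_{K_i}(Q_i)=0$'' as you assert; the invariance step is essential: on the invariant set $Q_{i+1}=Q_i$ and $\mathcal{U}$ is stationary, so $Q_i\in C_Q$ by \Cref{lemma:1}, whence $S_{L_i}(\hat R_i)=0$ and the update law gives $\omega_{i+1} = \tfrac{m-l}{m+l}\omega_i$, which combined with $\omega_{i+1}=-\omega_i$ forces $\tfrac{2m}{m+l}\omega_i = 0$.

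Regarding the ``main obstacle'' you identify: the paper performs none of the Baker--Campbell--Hausdorff remainder bookkeeping you anticipate. It combines the two exponentials into one and truncates $\exp\bigl(\tfrac{h}{2}(\omega_{i+1}+\omega_i)^\times\bigr) \approx I + \tfrac{h}{2}(\omega_{i+1}+\omega_i)^\times$ outright, under which $\Delta\mathcal{U}_i$ collapses exactly to $-\tfrac{h}{2}(\omega_{i+1}+\omega_i)^T S_{L_i}(\hat R_i)$; the filter is then reverse-engineered by demanding $m(\omega_{i+1}-\omega_i) - k_p h S_{L_i}(\hat R_i) = -l(\omega_{i+1}+\omega_i)$ rather than verified against a given scheme. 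So the paper's decrease is exact only modulo that first-order truncation, and your proposed higher-order bounds with a smallness condition on $h$ would strengthen the result beyond what the paper proves --- legitimate extra work, but not needed to reproduce its argument. Similarly, your closing linearization/Schur step for exponential stability is absent from the paper: the theorem claims asymptotic stability, which LaSalle together with \Cref{lemma:2} already delivers.
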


\begin{proof}
Using the third equation from \cref{eq:22},
\begin{align} \label{eq:23}
     Q_{i+1} & = R_{i+1}\hat{R}^T_{i+1} \nonumber \\ & = Q_i\hat{R}_i\exp{\left(\frac{h}{2}(\hat{\omega}_{i+1} + \hat{\omega}_i)^\times\right)}\hat{R}_i^T
\end{align}

Let's denote,
\begin{align}
    & \mathcal{U}_i := \mathcal{U}(Q_i) = \mathit{\Phi}(\langle\,I-Q_i,K_i\rangle) \\
    & \mathcal{T}_i := \mathcal{T}(\omega_i) = \frac{m}{2}(\omega_i)^T(\omega_i)
\end{align}
  
We choose the following discrete-time Lyapunov candidate,
\begin{equation}
    V_i := V(Q_i,\omega_i) := k_p\mathcal{U}_i + \mathcal{T}_i
\end{equation}
where $k_p>0$ is a constant.

The stability of the attitude and angular velocity error can be shown by analyzing $\Delta V_i = k_p\Delta\mathcal{U}_i + \Delta\mathcal{T}_i$.\\

Assuming $\mathit{\Phi}$ to be the identity map and $K_i$ to be constant and let $K = K_i = K_{i+1}$
\begin{align}
     & \Delta\mathcal{U}_i = \mathcal{U}_{i+1} - \mathcal{U}_i = \langle\,I-Q_{i+1},K\rangle - \langle\,I-Q_i,K\rangle \nonumber \\
     & \Delta\mathcal{U}_i = \langle\,Q_i - Q_{i+1},K\rangle = -\langle\,\Delta Q_i,K\rangle
\end{align}
where, $\Delta Q_i = Q_{i+1} - Q_i$. Now,
\begin{align}
    \Delta Q_i & = Q_{i+1} - Q_i \nonumber \\
    & = Q_i\left[\hat{R}_i\exp{\left(\frac{h}{2}(\hat{\omega}_{i+1} + \hat{\omega}_i)^\times\right)}\hat{R}_i^T - I\right]
\end{align}

Approximating $\exp{\left(\frac{h}{2}(\hat{\omega}_{i+1} + \hat{\omega}_i)^\times\right)}$ by the first two terms in the expansion as,
\begin{equation}
    \exp{\left(\frac{h}{2}(\hat{\omega}_{i+1} + \hat{\omega}_i)^\times\right)} \approx I + \frac{h}{2}(\hat{\omega}_{i+1} + \hat{\omega}_i)^\times
\end{equation}
we have,
\begin{align}
    \Delta Q_i & = Q_i\left[\hat{R}_i\left(I + \frac{h}{2}(\hat{\omega}_{i+1} + \hat{\omega}_i)^\times\right)\hat{R}_i^T - I\right] \nonumber \\
    & = \frac{h}{2}Q_i\left(\hat{R}_i(\hat{\omega}_{i+1} + \hat{\omega}_i)^\times\hat{R}_i^T\right) \nonumber \\
    & = \frac{h}{2}Q_i\left(\hat{R}_i(\hat{\omega}_{i+1} + \hat{\omega}_i)\right)^\times .
\end{align}

In the absence of measurement errors, we have $\Tilde{U}_i^m = R_i^TE_i$. \\

Therefore,
\begin{align}
    \Delta\mathcal{U}_i & = -\frac{h}{2}\left\langle\,Q_i\left( \hat{R}_i \left(\omega_{i+1} + \omega_i \right) \right)^\times,K\right\rangle \nonumber \\
    & = -\frac{h}{2}\left \langle\,R_i(\omega_{i+1} + \omega_i)^\times\hat{R}_i^T,E_iW_iE_i^T\right\rangle \nonumber\\
    & = -\frac{h}{2}\left \langle\,(\omega_{i+1} + \omega_i)^\times\hat{R}_i^T,R_i^TE_iW_iE_i^T\right\rangle \nonumber\\
    & = -\frac{h}{2}\left \langle\,(\omega_{i+1} + \omega_i)^\times\hat{R}_i^T,\Tilde{U}_i^mW_iE_i^T\right\rangle
\end{align}

We have $L_i := E_iW_i(\Tilde{U}_i^m)^T$.
\begin{align}\label{eq:32}
    \Delta\mathcal{U} & = -\frac{h}{2}\left \langle\,(\omega_{i+1} + \omega_i)^\times,L_i^T\hat{R}_i\right\rangle \nonumber\\
    & = -\frac{h}{4}\left \langle\,(\omega_{i+1} + \omega_i)^\times,L_i^T\hat{R}_i - \hat{R}_i^TL_i\right\rangle \nonumber\\
    & = -\frac{h}{2}(\omega_{i+1}+\omega_i)^TS_{L_i}(\hat{R}_i)
\end{align}
where, $S_{L_i}(\hat{R}_i) = \text{vex}(L_i^T\hat{R}_i - \hat{R}_i^TL_i)$. Similarly we can compute the change in the kinetic energy as follows.
\begin{align}
        \Delta\mathcal{T} & = \mathcal{T}(\omega_{i+1}) - \mathcal{T}(\omega_i) \nonumber \\
        & = (\omega_{i+1} + \omega_i)^T\frac{m}{2}(\omega_{i+1} - \omega_i) \nonumber\\
       \Delta\mathcal{T} & = (\omega_{i+1} + \omega_i)^T\frac{m}{2}(\omega_{i+1} - \omega_i)
\end{align}

Therefore, the change in the value of the candidate Lyapunov function can be computed as, 
\begin{align}
    \Delta V_i & = V_{i+1} - V_i = \Delta\mathcal{T}_i + k_p\Delta\mathcal{U}_i \nonumber\\
    & = \frac{1}{2}\left(\omega_{i+1} + \omega_i\right)^T\left(m(\omega_{i+1} - \omega_i) - k_phS_{L_i}(\hat{R}_i)\right)
\end{align}

Now, for $\Delta V_i$ to be negative definite, 
\begin{equation}
    m(\omega_{i+1} - \omega_i) - k_phS_{L_i}(\hat{R}_i) = -l(\omega_{i+1} + \omega_i)
\end{equation}
where $l > 0, l \neq m$. Therefore,
\begin{equation}\label{eq:36}
    \omega_{i+1} = \frac{1}{m+l}\left[ (m-l)\omega_i + k_phS_{L_i}(\hat{R}_i) \right]
\end{equation}
and $\Delta V_i$ simplifies to,
\begin{equation}
    \Delta V_i = -\frac{l}{2}\left(\omega_{i+1} + \omega_i\right)^T\left(\omega_{i+1} + \omega_i\right).
\end{equation}

We employ the discrete-time La-Salle invariance principle from \cite{mei2017lasalle} considering our domain ($SO(3)\times\mathbb{R}^3$) to be a subset of $\mathbb{R}^{12}$, and for this we first compute $\mathscr{E} := \{(Q_i,\omega_i) \in SO(3)\times\mathbb{R}^3 | \Delta V_i(Q_i,\omega_i) = 0\} = \{(Q_i,\omega_i) \in SO(3)\times\mathbb{R}^3 \; | \; \omega_{i+1} + \omega_i = 0\}$. From \cref{eq:23},  $\omega_{i+1} + \omega_i = 0$ implies that,
\begin{equation}
    Q_{i+1} = Q_i
\end{equation}
Also, from \cref{eq:32} we have  $\Delta\mathcal{U} = 0$ whenever $\omega_{i+1} + \omega_i = 0$. This implies that the potential function, which is a Morse function according to \cref{lemma:1}, is not changing and therefore has converged to one of its stationary points. Stationary points of the Morse function are characterised by the solutions of,
\begin{equation}\label{eq:39}
    S_K(Q_i) = 0 \Rightarrow \text{vex}\left(KQ_i^T - Q_iK\right) = 0 \Rightarrow KQ_i^T = Q_iK.
\end{equation}

Multiplying \cref{eq:39} from the right hand side by $Q_i$  and from the left hand side by $Q_i^T$, and also noting that $Q_iQ_i^T = Q_i^TQ_i = I_{3\times3}$, we have the following relation at the critical points.
\begin{equation}\label{eq:40}
    Q_i^TKQ_i^TQ_i = Q_i^TQ_iKQ_i \Rightarrow Q_i^TK = KQ_i
\end{equation}

Now, $L_i = E_iW_i(\Tilde{U}_i^m)^T=E_iW_i(R_i^TE_i)^T= (E_iW_iE_i^T)R_i=KR_i$, which will further give us,
\begin{align}\label{eq:41}
    \left(S_{L_i}(\hat{R}_i)\right)^\times & = L_i^T\hat{R}_i - \hat{R}_i^TL_i \nonumber\\
    & = R_i^TK\hat{R}_i - \hat{R}_i^TKR_i
\end{align}

Multiplying \cref{eq:41} from the right hand side by $\hat{R}_i^T$  and from the left hand side by $\hat{R}_i$,
\begin{align}
    \hat{R}_i\left(S_{L_i}(\hat{R}_i)\right)^\times\hat{R}_i^T & = \hat{R}_iR_i^TK - KR_i\hat{R}_i^T \nonumber \\
    & = Q_i^TK - KQ_i
\end{align}

At the critical points from \cref{eq:39}, we have that $\hat{R}_i\left(S_{L_i}(\hat{R}_i)\right)^\times\hat{R}_i^T = 0$. Since both $\hat{R}_i$ and $\hat{R}_i^T$ are orthogonal matrices, the following will hold true at the critical points, 
\begin{equation}
   \left(S_{L_i}(\hat{R}_i)\right)^\times = 0 \Rightarrow S_{L_i}(\hat{R}_i) = 0
\end{equation}

Similarly, $S_{L_{i+1}}(\hat{R}_{i+1}) = 0$. Substituting this information in \cref{eq:36} yields,
\begin{equation}
    \omega_{i+1} = \frac{1}{m+l} (m-l)\left(\omega_i \right)
\end{equation}

Now if, $\omega_{i+1} + \omega_i = 0$, we have,
\begin{equation}
    \frac{2m}{m+l}\omega_i = 0 \Rightarrow \omega_i = 0 \Rightarrow \omega_i = \omega_{i+1} = 0
\end{equation}

We now evaluate the set to be $\mathscr{E} = \{(Q_i,\omega_i) \in SO(3)\times\mathbb{R}^3 \; | \; Q_i \in C_Q, \omega_i = 0\}$ further, recognising the fact that this is also an invariant set. Hence, we obtain $\mathscr{M} = \mathscr{E} = \{(Q_i,\omega_i) \in SO(3)\times\mathbb{R}^3 \; | \; Q_i \in C_Q, \omega_i = 0\}$. Furthermore, we have that $\mathscr{M} \subset V^{-1}(0)$.
Therefore, we obtain the positive limit set as the set,
\begin{align}
    \mathscr{I} & := \mathscr{M} \cap V^{-1}(0) \nonumber \\ & = \{(Q,\omega) \in SO(3)\times\mathbb{R}^3\; | \; Q \in C_Q, \omega = 0 \}
\end{align}

Therefore, in the absence of measurement errors, all the solutions of this filter
converge asymptotically to the set $\mathscr{I}$. More specifically, the attitude estimation error converges
to the set of critical points of $\langle\,I-Q,K\rangle$. The unique global minimum of this function is at $(Q,\omega) = (I,0)$ from \cref{lemma:2}, thus proving our claim of asymptotic stability. The remainder of this proof is similar to the last part of the proof of stability of the variational attitude estimator in \cite{izadi2014rigid}\\

Now consider the set,
\begin{equation}
    \mathscr{C} = \mathscr{I} \backslash (I,0)
\end{equation}
which consists of all the stationary states that the estimation errors may converge to,
besides the desired estimation error state $(I,0)$. Note that all states in the stable
manifold of a stationary state in $\mathscr{C}$ will converge to this stationary state. From
the properties of the critical points $Q^i \in C_Q \backslash (I)$ of $\mathit{\Phi}(\langle\,K,I-Q\rangle)$  given in \cref{lemma:2}.  we see that the stationary points in $\mathscr{I}\backslash(I,0) = \{(Q^i,0) : Q^i \in C_Q \backslash (I)\}$ have stable manifolds whose dimensions depend on the index of $Q^i$. Since the angular velocity estimate error $\omega$ converges globally to the zero vector, the dimension of the stable manifold $\mathcal{M}_i^S$ of $(Q^i,0) \in SO(3)\times\mathbb{R}^3$ is
\begin{equation}\label{eq:48}
    \text{dim}(\mathcal{M}_i^S) = 3 + (3 - \text{index of }\; Q^i) = 6 - \text{index of }\; Q^i
\end{equation}

Therefore, the stable manifolds of $(Q,\omega) = (Q^i,0)$ are three-dimensional, four
dimensional, or five-dimensional, depending on the index of $Q^i \in C_Q \backslash (I)$ according to \cref{eq:48}. Moreover, the value of the Lyapunov function $V(Q_i,\omega_i)$ is non
decreasing (increasing when $(Q_i,\omega_i) \notin \mathscr{I}$) for trajectories on these manifolds when going backwards in time. This implies that the metric distance between error
states $(Q,\omega)$ along these trajectories on the stable manifolds $\mathcal{M}_i^S$ grows with the time separation between these states, and this property does not depend on the
choice of the metric on $SO(3) \times \mathbb{R}^3$. Therefore, these stable manifolds are embedded (closed) sub-manifolds of $SO(3) \times \mathbb{R}^3$ and so is their union. Clearly, all states starting in the complement of this union, converge to the stable equilibrium
$(Q,\omega) = (I,0)$; therefore the domain of attraction of this equilibrium is,
\begin{equation}
    DOA{(I,0)} = SO(3)\times\mathbb{R}^3\backslash   \{\cup_{i=1}^3\mathcal{M}_i^S\}
\end{equation}
which is a dense open subset of $SO(3)\times\mathbb{R}^3$.
\end{proof}

\section{Numerical Simulations}\label{sec:4}

This section presents numerical simulation results of the discrete-time estimator presented in \cref{sec:3}. The estimator is simulated over a time interval of $T$ = 60 s, with a step-size of $h = 0.01 s$. The rigid body is assumed to have an initial attitude and angular velocity given by,
$$ R_0 = \text{expm}_{SO(3)}\left(\left( \frac{\pi}{4}\times\left[\frac{4}{7}, \; \frac{2}{7}, \; \frac{5}{7}\right]^T \right)^\times\right),$$ $$\text{and} \;\; \Omega_0 = \frac{\pi}{60}\times[-1.2, \; 2.1, \; -1.9]^T \; rad/s $$

The inertial scalar gain is $m = 100$ and the dissipation term is chosen to be $l = 40$. The difference of sampling rate between measurements of angular velocity and measurements inertial vectors in body-fixed frame is taken to be $n = 10$. Furthermore, the value of gain $k_p$ is chosen to be $k_p = 150$.  $W$ is selected based on the measured set of inertial vectors $E$ at each instant such that it satisfies \cref{lemma:1}. Initially estimated states have the following initial estimation errors:
$$ Q_0 = \text{expm}_{SO(3)}\left(\left( \frac{\pi}{2.5}\times\left[\frac{4}{7}, \; \frac{2}{7}, \; \frac{5}{7}\right]^T \right)^\times\right),$$ $$\text{and} \;\; \omega_0 = \frac{\pi}{60}\times[0.001, \; -0.002, \; 0.003]^T \; rad/s $$.

It has been assumed that there are at most 9 inertially known directions that are being measured by the sensors attached to the rigid body. The number of observed direction can vary randomly between 2 to 9 at each time instant. In the case where the number of observed directions is 2, the cross product of the two measurements is used as the third measurement. The standard rigid body dynamics are used to produce true states of the rigid body by applying sinusoidal forces. These true states are used to simulate the observed direction in the body-fixed frame, as well as compare true states and estimated states. Bounded, zero-mean random noises are generated which are then added to the real quantities in order to simulate real measurements. Based on coarse attitude sensors like sun sensors and magnetometers, a random noise bounded in magnitude by $2.4^\circ$ is added to the matrix $U = R^TE$ in order to generate measured $U^m$. Similarly, a random noise bounded in magnitude by $0.97^\circ/s$, which is close to real noise levels of coarse rate gyros, is added $\Omega$ to generate measured $\Omega_m$.
The principle angle $\phi$ of the rigid body's attitude estimation error $Q$ is shown in the \cref{phi_vs_t}. Components of estimation error $\omega$ in the rigid body's angular velocity are shown in \cref{omega_vs_t}. All the estimation errors are seen to converge to a bounded neighborhood of $(Q,\omega) = (I,0)$ with the bound being dictated by sensor noise magnitude bounds. The rate of convergence is dictated by the value of $k_p$. Increasing value of $k_p$ leads to faster convergence of estimation errors. However, the bound on errors in the presence of noise increases with the value of $k_p$. If the value of $l$ is closer to $m$, i.e. $m-l$ is smaller, then the bound on error decreases while increasing the time of convergence.
\begin{figure}
	\centering
	\includegraphics[scale = 0.42,trim={1.5cm 0cm 0.5cm 1cm},clip]{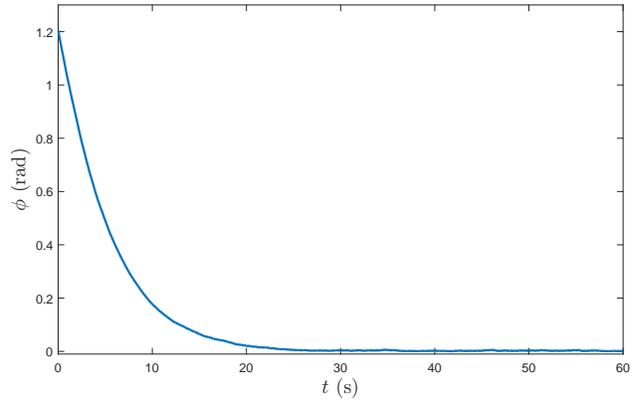}
	\caption{Principle angle of the attitude estimation error}
	\label{phi_vs_t}
\end{figure}

\begin{figure}
	\centering
	\includegraphics[scale = 0.42,trim={1cm 0 0.5cm 1cm},clip]{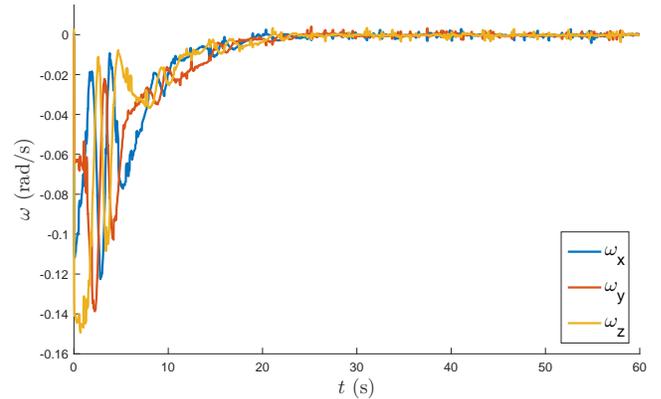}
	\caption{Angular velocity estimation error}
	\label{omega_vs_t}
\end{figure}

\section{Conclusion}\label{sec:5}

We develop a geometric attitude and angular velocity estimation scheme using discrete-time Lyapunov stability analysis in the presence of multi-rate measurements. The attitude determination problem from two or more vector measurements in the body-fixed frame is formulated as Wahba's optimization problem. To overcome the multi-rate challenge, a discrete-time model for attitude kinematics is used to propagate the inertial vector measurements forward in time. The filtering scheme is obtained with the aid of an appropriate discrete-time Lyapunov function consisting of Wahba's cost function as an artificial potential term and a kinetic energy-like term that is quadratic in the angular velocity estimation error. The filtering scheme was proven to be exponentially stable in the absence of measurement noise and the domain of convergence is proven to be almost global. Furthermore, the rate of convergence of the estimated states to the real state can be controlled by choosing appropriate gains. Numerical simulations were provided with realistic inputs in the presence of bounded measurement noise. Numerical simulations verified that the estimated states converge to a bounded neighborhood of $(I,0)$. Future endeavors are towards obtaining an optimal estimation multi-rate estimation scheme via variational methods, while also guaranteeing asymptotic stability of estimation errors.

\bibliographystyle{IEEEtran}
\bibliography{bibliography}

\begin{thebibliography}{10}
\providecommand{\url}[1]{#1}
\csname url@rmstyle\endcsname
\providecommand{\newblock}{\relax}
\providecommand{\bibinfo}[2]{#2}
\providecommand\BIBentrySTDinterwordspacing{\spaceskip=0pt\relax}
\providecommand\BIBentryALTinterwordstretchfactor{4}
\providecommand\BIBentryALTinterwordspacing{\spaceskip=\fontdimen2\font plus
\BIBentryALTinterwordstretchfactor\fontdimen3\font minus
  \fontdimen4\font\relax}
\providecommand\BIBforeignlanguage[2]{{%
\expandafter\ifx\csname l@#1\endcsname\relax
\typeout{** WARNING: IEEEtran.bst: No hyphenation pattern has been}%
\typeout{** loaded for the language `#1'. Using the pattern for}%
\typeout{** the default language instead.}%
\else
\language=\csname l@#1\endcsname
\fi
#2}}

\bibitem{black1964passive}
H.~D. Black, ``A passive system for determining the attitude of a satellite,''
  \emph{AIAA journal}, vol.~2, no.~7, pp. 1350--1351, 1964.

\bibitem{wahba1965least}
G.~Wahba, ``A least squares estimate of satellite attitude,'' \emph{SIAM
  review}, vol.~7, no.~3, pp. 409--409, 1965.

\bibitem{markley1988attitude}
F.~L. Markley, ``Attitude determination using vector observations and the
  singular value decomposition,'' \emph{Journal of the Astronautical Sciences},
  vol.~36, no.~3, pp. 245--258, 1988.

\bibitem{shuster1981three}
M.~D. Shuster and S.~D. Oh, ``Three-axis attitude determination from vector
  observations,'' \emph{Journal of guidance and Control}, vol.~4, no.~1, pp.
  70--77, 1981.

\bibitem{mortari1997esoq}
D.~Mortari, ``{ESOQ}: A closed-form solution to the {Wahba} problem,''
  \emph{Journal of the Astronautical Sciences}, vol.~45, no.~2, pp. 195--204,
  1997.

\bibitem{psiaki2012numerical}
M.~L. Psiaki and J.~C. Hinks, ``Numerical solution of a generalized wahba
  problem for a spinning spacecraft,'' \emph{Journal of Guidance, Control, and
  Dynamics}, vol.~35, no.~3, pp. 764--773, 2012.

\bibitem{crassidis2007survey}
J.~L. Crassidis, F.~L. Markley, and Y.~Cheng, ``Survey of nonlinear attitude
  estimation methods,'' \emph{Journal of guidance, control, and dynamics},
  vol.~30, no.~1, pp. 12--28, 2007.

\bibitem{madinehi2013rigid}
N.~Madinehi, ``Rigid body attitude estimation: An overview and comparative
  study,'' \emph{Electronic Thesis and Dissertation Repository}, 2013.

\bibitem{sanyal2012attitude}
A.~K. Sanyal and N.~Nordkvist, ``Attitude state estimation with multirate
  measurements for almost global attitude feedback tracking,'' \emph{Journal of
  Guidance, Control, and Dynamics}, vol.~35, no.~3, pp. 868--880, 2012.

\bibitem{khosravian2015recursive}
A.~Khosravian, J.~Trumpf, R.~Mahony, and T.~Hamel, ``Recursive attitude
  estimation in the presence of multi-rate and multi-delay vector
  measurements,'' in \emph{2015 American Control Conference (ACC)}.\hskip 1em
  plus 0.5em minus 0.4em\relax IEEE, 2015, pp. 3199--3205.

\bibitem{bahrami2014delay}
S.~Bahrami and M.~Namvar, ``Delay compensation in global estimation of
  rigid-body attitude under biased velocity measurement,'' in \emph{53rd IEEE
  Conference on Decision and Control}.\hskip 1em plus 0.5em minus 0.4em\relax
  IEEE, 2014, pp. 481--486.

\bibitem{bahrami2017global}
------, ``Global attitude estimation using single delayed vector measurement
  and biased gyro,'' \emph{Automatica}, vol.~75, pp. 88--95, 2017.

\bibitem{khosravian2014velocity}
A.~Khosravian, J.~Trumpf, R.~Mahony, and T.~Hamel, ``Velocity aided attitude
  estimation on so (3) with sensor delay,'' in \emph{53rd IEEE Conference on
  Decision and Control}.\hskip 1em plus 0.5em minus 0.4em\relax IEEE, 2014, pp.
  114--120.

\bibitem{mahony2008nonlinear}
R.~Mahony, T.~Hamel, and J.-M. Pflimlin, ``Nonlinear complementary filters on
  the special orthogonal group,'' \emph{IEEE Transactions on automatic
  control}, vol.~53, no.~5, pp. 1203--1218, 2008.

\bibitem{vasconcelos2007landmark}
J.~F. Vasconcelos, R.~Cunha, C.~Silvestre, and P.~Oliveira, ``Landmark based
  nonlinear observer for rigid body attitude and position estimation,'' in
  \emph{2007 46th IEEE Conference on Decision and Control}.\hskip 1em plus
  0.5em minus 0.4em\relax IEEE, 2007, pp. 1033--1038.

\bibitem{vasconcelos2008nonlinear}
J.~F. Vasconcelos, C.~Silvestre, and P.~Oliveira, ``A nonlinear observer for
  rigid body attitude estimation using vector observations,'' \emph{IFAC
  Proceedings Volumes}, vol.~41, no.~2, pp. 8599--8604, 2008.

\bibitem{valpiani2008nonlinear}
J.~M. Valpiani and P.~L. Palmer, ``Nonlinear geometric estimation for satellite
  attitude,'' \emph{Journal of guidance, control, and dynamics}, vol.~31,
  no.~4, pp. 835--848, 2008.

\bibitem{izadi2014rigid}
M.~Izadi and A.~K. Sanyal, ``Rigid body attitude estimation based on the
  lagrange--d’alembert principle,'' \emph{Automatica}, vol.~50, no.~10, pp.
  2570--2577, 2014.

\bibitem{mei2017lasalle}
W.~Mei and F.~Bullo, ``Lasalle invariance principle for discrete-time dynamical
  systems: A concise and self-contained tutorial,'' \emph{arXiv preprint
  arXiv:1710.03710}, 2017.

\end{thebibliography}

\end{document}